\title{On Czerwinski's ``$\p\neq\np$ relative to a $\p$-complete oracle''\thanks{Supported in part by NSF grant 
		CCF-2006496.}}
\author{Michael~C.~Chavrimootoo} 
\author{Tran~Duy~Anh~Le}
\author{Michael~P.~Reidy}
\author{Eliot~J.~Smith}
\affil{Department of Computer Science\\University of Rochester\\Rochester, NY 14627, USA}
\newcommand{\naturalnumber}{\ensuremath{{\mathbb{N}}}}
\newcommand{\naturalnumberpositive}{\ensuremath{{\mathbb{N}^+}}}
\newcommand{\p}{\ensuremath{{\rm P}}}
\newcommand{\np}{\ensuremath{{\rm NP}}}
\newcommand{\dec}{\ensuremath{{\rm \Sigma_0^0}}}
\newcommand{\re}{\ensuremath{{\rm \Sigma_1^0}}}
\newcommand{\pspace}{\ensuremath{{\rm PSPACE}}}
\newcommand{\pair}[1]{\langle #1 \rangle}
\newcommand{\up}{\ensuremath{{\rm U}_{\rm P}}}
\newcommand{\unp}{\ensuremath{{\rm U}_{\rm NP}}}
\newcommand{\detp}{\ensuremath{{\rm D}_{\rm P}}}
\newcommand{\detnp}{\ensuremath{{\rm D}_{\rm NP}}}
\newcommand{\hp}{\ensuremath{{\rm HP}}}
\newcommand{\calo}{\mathcal{O}}
\newcommand{\calc}{\mathcal{C}}
\newcommand{\cald}{\mathcal{D}}
\newcommand{\calx}{\mathcal{X}}
\newcommand{\turing}{\ensuremath{\leq_{T}}}
\newcommand{\boundedturing}[1]{\ensuremath{\leq_{#1\hbox{-}T}}}
\newcommand{\manyone}{\ensuremath{\leq_m}}
\newtheorem{theorem}{Theorem}
\newtheorem{proposition}[theorem]{Proposition}
\newtheorem{lemma}[theorem]{Lemma}
\date{December 7, 2023}
\begin{document}\sloppy

\maketitle

\begin{abstract}
 In this paper, we take a closer look at Czerwinski's ``$\p\neq\np$ relative to a $\p$-complete oracle''~\cite{cze:t:p-neq-np-p-complete}. 
 There are (uncountably) infinitely-many relativized worlds where $\p$ and $\np$ differ, and it is well-known that for any $\p$-complete problem $A$, $\p^A \neq \np^A \iff \p\neq\np$. 
 The paper defines two sets $\detp$ and $\detnp$ and builds the purported proof of their main theorem on the claim that an oracle Turing machine with $\detnp$ as its oracle and that accepts $\detp$ must make $\Theta(2^n)$ queries to the oracle.
 We invalidate the latter by proving that there is an oracle Turing machine with $\detnp$ as its oracle that accepts $\detp$ and yet only makes one query to the oracle. We thus conclude that Czerwinski's paper~\cite{cze:t:p-neq-np-p-complete} fails to establish that $\p \neq \np$.
\end{abstract}

\section{Introduction}

This critique looks at Czerwinski's paper titled ``$\p \neq\np$ relative to a $\p$-complete oracle''~\cite{cze:t:p-neq-np-p-complete}, which asserts that while it does not resolve the $\p$ versus $\np$ problem, its method may bring insights into the resolution of the problem. Let us first discuss the importance and relevance of such a result.

Complexity theory is rife with hard, open problems that we do not know how to resolve using current methods. The most notable example is the $\p$ versus $\np$ problem that has been under consideration for several decades. So what should one do when faced with such obstacles? When we hope (and are unable to provide a proof) that a certain ``complexity statement'' $\calx$ (e.g., ``$\p = \np$'') does not hold, it is common to construct a ``relativized world'' in which that statement does not hold. The direct implication of such a result is that a proof of statement $\calx$ cannot relativize, i.e., cannot hold in every relativized world. The common view at that point is that $\calx$ is either false, or a proof of $\calx$ requires nonrelativizing proof techniques, thus making finding a proof a much harder challenge.

Of the many relativized worlds that have been given in the complexity theory literature, there are two important (classic) worlds that concern the $\p$ versus $\np$ problem. Baker, Gill, and Solovay~\cite{bak-gil-sol:j:rel} showed that there is an oracle $A$ where $\p^A=\np^A$ (in their paper, $A$ is some $\pspace$-complete set), and there is an oracle $B$ where $\p^B \neq \np^B$ (which they construct by diagonalization). Since we already know of an oracle relative to which $\p\neq\np$, why critique a paper that provides yet another oracle relative to which $\p\neq\np$? Well, it is a standard observation taught in introductory courses that for any $L \in \p$, $\p^L = \p$ and also $\np^L=\np$. So Czerwinski's paper~\cite{cze:t:p-neq-np-p-complete}, which claims to give a $\p$-complete oracle $X$ such that $\p^X \neq \np^X$, is one that is in fact---and it seems unbeknownst to the author as they claim it is unknown if $\p^\p=\p$ and it is unknown if $\np^\p=\np$---claiming to prove that $\p \neq \np$.

In this critique, we argue that their purported proof does not establish what it claims, and we give a counterexample to the result that their main theorem's purported proof centrally relies on.

\section{Preliminaries}\label{s:prelims}
 
Let $\naturalnumber = \{0, 1, 2, \ldots\}$ and let $\naturalnumberpositive = \{1, 2, 3, \ldots\}$. In this paper, we assume basic familiarity with Turing machines and computation, including time-bounded computation, using Turing machines.
Readers unfamiliar with those concepts may wish to consult the following  textbooks~\cite{hop-ull:b:automata,pap:b:complexity,du-ko:b:complexity}.
In their paper, Czerwinski~\cite{cze:t:p-neq-np-p-complete} assumes that the input alphabet of every Turing machine is $\{0,1\}$, which is unnatural, but in an attempt at doing as faithful of an analysis as possible, we also make that assumption. It's not too hard to see that the arguments we make still hold even if the input alphabet is an arbitrary finite set with at least two elements.
In this paper, we sometimes deviate slightly from Czerwinski's~\cite{cze:t:p-neq-np-p-complete} notations and typesetting in order to stay as close as possible to standard practices. For example, the language that we denote by $\up$ (defined in the next section) is denoted by $U_{\text{\bf P}}$ in~\cite{cze:t:p-neq-np-p-complete}.

\subsection{Classes and Oracle Machines}

As is standard, let $\p$ denote the class of decision problems accepted by deterministic polynomial-time Turing machines, and let $\np$ denote the class of decision problems accepted by nondeterministic polynomial-time Turing machines.
Additionally, we use the standard notation that $\dec$ is the class of recursive language and $\re$ is the class of recursively enumerable languages. 

We view an oracle as a language over a finite alphabet.
A deterministic oracle Turing machine (DOTM) $M$ with oracle $\calo$ (denoted $M^\calo$) is a type of deterministic Turing machine that has, in addition to the usual components of a Turing machine, an additional tape called the ``Query'' tape and three additional states called the ``Query'' state (which of course differs from the query tape despite having the same name), the ``Yes'' state, and the ``No'' state. The oracle machine has the special property that whenever it enters the Query state with $x$ on the Query tape, the next step of the machine is to either enter the Yes state if $x \in \calo$, or enter the No state if $x \not\in \calo$.
Nondeterministic oracle Turing machines (NOTMs) are defined analogously, and so are their time-bounded versions, i.e., deterministic polynomial-time oracle Turing machines (DPOTMs) and nondeterministic polynomial-time oracle Turing machines (NPOTMs).

For a given oracle $\calo$, $\p^\calo$ denotes the set of languages accepted by DPOTMs with oracle access to $\calo$. $\np^\calo$ and $\dec^\calo$ are defined analogously.
Given complexity classes $\calc, \cald$, the natural interpretation of $\calc^\cald = \bigcup_{A \in \cald} \calc^A$. Since it is easy to see that for every $A \in \p$, $\p^A = \p$ and $\np^A = \np$, it follows---and this is of course well-known---that $\p^\p = \p$ and $\np^\p = \np$. Thus it is clear that if the statement made in the title of Czerwinski's paper were validly established, Czerwinski would have unconditionally proved that $\p \neq \np$ \cite{cze:t:p-neq-np-p-complete}. 

\subsection{Reductions and Their Relationships}

In this paper, we consider several types of \emph{recursive} reductions, which we define below.

Given two sets $A$ and $B$, we say that $A$ Turing reduces to $B$ (denoted by $A \turing B$) exactly if there is a deterministic OTM $M$ such that $A = L(M^B)$ and $M$ halts on every input when $B$ is its oracle.

Given $f:\naturalnumber \rightarrow \naturalnumber$, the notation $A \boundedturing{f(n)} B$ means that $A$ Turing reduces to $B$ and the machine having oracle access to $B$ only makes up to $f(n)$ queries on each input on length $n$. 

Given two sets $A$ and $B$, we say that $A$ many-one reduces to $B$ (denoted by $A \manyone B$) exactly if there exists a computable function $f$ such that $(\forall x)[x \in A \iff f(x) \in B]$. It is well-known (and easy to see) that if $A \manyone B$, then $A \boundedturing{1} B$. The converse however is not true (see for example~\cite{hop-ull:b:automata}).

\section{Analysis of the Arguments\label{s:main}}

In their paper, Czerwinski~\cite{cze:t:p-neq-np-p-complete} define four sets that are used in their argument. Although they are not specific about it in their paper, we make the good-faith assumption that whenever the paper defines $M$ to be a Turing machine it is tacitly assumed that $M$ is a deterministic Turing machine (DTM)\@. And so, in what follows we assume that whenever we refer to $M$, we are speaking of a DTM\@. The sets in question are:

\begin{enumerate}
    \item $\up = \{ \langle M, x, 1^t\rangle \mid t\in \naturalnumberpositive \land M$ accepts $x$ within $t$ steps$\}$.
    \item $\unp = \{ \langle M, 1^n, 1^t\rangle \mid n \in \naturalnumber \land t \in \naturalnumberpositive \land (\exists x\in \{0,1\}^n)[M$ accepts $x$ within $t$ steps$]\}$.
    \item $\detp = \{ \langle M, x\rangle \mid M$ accepts $x\}$.
    \item $\detnp = \{ \langle M, 1^n\rangle \mid n \in \naturalnumber \land (\exists x\in \{0,1\}^n)[M$ accepts $x]\}$.
\end{enumerate}

The author makes the observation for every DTM $M$ and
\begin{enumerate}
\item every string $x$, $\langle M, x \rangle \in \detp \iff (\exists t > 0)[\langle M, x, 1^t\rangle \in \up]$.
\item every $n \in \naturalnumber$, $\langle M, 1^n\rangle \in \detnp \iff (\exists t > 0)[\langle M, 1^n, 1^t\rangle \in \unp]$.
\end{enumerate}

The $\p$-complete oracle alluded to in the title of Czerwinski's paper~\cite{cze:t:p-neq-np-p-complete} is the set $\up$, which is a well-known $\p$-complete problem.

We now examine the three purported proofs provided by~\cite{cze:t:p-neq-np-p-complete}, and find that they do not prove the paper's claim.

\subsection{Analysis of Czerwinski's Lemma~1}

We first state Czerwinski's Lemma~1 as it appears in the paper~\cite{cze:t:p-neq-np-p-complete}, except with some minor changes in the notation.
When examining Lemma~1 in Czerwinski's paper~\cite{cze:t:p-neq-np-p-complete}, it is unclear what specific claim is being established.

\begin{lemma}[{\cite[Lemma~1]{cze:t:p-neq-np-p-complete}}]\label{lemma}
If $\langle M, 1^n\rangle \in \detnp$, then it is undecidable, which $x \in \{0,1\}^n$ is in $L(M)$.
\end{lemma}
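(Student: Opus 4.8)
The plan is to first pin down a sensible reading of the statement, since for any \emph{fixed} machine $M$ and \emph{fixed} $n$ the set $\{x \in \{0,1\}^n \mid x \in L(M)\}$ is a particular finite set and is therefore trivially decidable. The only defensible interpretation is thus the \emph{uniform} one: there is no single algorithm that, given an arbitrary pair $\langle M, 1^n\rangle \in \detnp$ together with a string $x \in \{0,1\}^n$, correctly decides whether $x \in L(M)$ (equivalently, no algorithm computes the finite set $\{x \in \{0,1\}^n \mid x \in L(M)\}$ from the input $\langle M, 1^n\rangle$). I would prove undecidability in this sense by reducing the acceptance problem to it.

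I would fix as my known-undecidable set the acceptance problem $A = \{\langle M, x\rangle \mid M \text{ accepts } x\}$, which is exactly $\detp$ and is undecidable by the standard diagonalization argument. Then, supposing for contradiction that there is a total algorithm $D$ that, on input $\langle M, 1^n\rangle \in \detnp$ and $x \in \{0,1\}^n$, outputs whether $x \in L(M)$, I would use $D$ as a subroutine to decide $A$, contradicting its undecidability.

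The key step is the reduction itself. Given an arbitrary instance $\langle M, x\rangle$ of $A$, I would set $n' = |x| + 1$ and construct, effectively from $M$ and $x$, a machine $M'$ that on input $y \in \{0,1\}^{n'}$ behaves as follows: if $y = 0x$ (the symbol $0$ concatenated with $x$) it accepts immediately; if $y = 1x$ it simulates $M$ on $x$ and accepts iff $M$ accepts $x$; on all other inputs it rejects. The sentinel string $0x$ guarantees $\langle M', 1^{n'}\rangle \in \detnp$, so the precondition of $D$ is met, while $1x \in L(M')$ holds iff $M$ accepts $x$. Hence querying $D$ on $\langle M', 1^{n'}\rangle$ and the string $1x$ decides membership of $\langle M, x\rangle$ in $A$, the desired contradiction.

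The main obstacle is precisely honoring the hypothesis $\langle M, 1^n\rangle \in \detnp$: a naive reduction that simply asks ``is $x \in L(M)$?'' is not licensed when $M$ happens to accept no string of length $|x|$, since then the input to $D$ violates its stated domain. The sentinel construction is exactly what forces the precondition to hold unconditionally. I expect the remaining verifications---that $M'$ is computable from $\langle M, x\rangle$ and that the two displayed equivalences hold---to be entirely routine.
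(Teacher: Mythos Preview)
Your reading and proof are mathematically sound: under the uniform interpretation you adopt, the sentinel construction correctly forces the precondition $\langle M', 1^{n'}\rangle \in \detnp$ and reduces the acceptance problem $\detp$ to your hypothetical decider $D$. However, the paper does \emph{not} set out to prove this lemma at all---it is a critique of Czerwinski, and its treatment of Lemma~\ref{lemma} consists of flagging the statement as ambiguous and then analyzing two readings, neither of which is yours. Reading~(i) is that no computable function can, given $\langle M, 1^n\rangle \in \detnp$, \emph{output} a witness $x \in L(M) \cap \{0,1\}^n$; the paper \emph{refutes} this via dovetailing (Proposition~\ref{prop:dnp-computable-f}). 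Reading~(ii) is that $\detnp$ itself is undecidable; the paper confirms this by a many-one reduction from $\hp$ (Proposition~\ref{prop:dnp-undecidable}). So your argument establishes a third, defensible formalization of the vague sentence, but it matches neither the paper's interpretations nor its purpose: the paper's central point about Lemma~\ref{lemma} is precisely that one natural reading is outright \emph{false}, whereas you have supplied and verified a reading under which it is true.
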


From this statement, both the ambiguity of whether $\langle M, 1^n \rangle$ is already known to be a member of $\detnp$ and the ambiguity of the term ``which'' create confusion in understanding Lemma~1's claim. Upon first investigation, Czerwinski's Lemma~1 seems to claim that there exists no computable function that outputs an $x\in\{0,1\}^n$ such that $x \in L(M)$ when given a TM $M$ and input string of fixed length $n$ such that $\langle M,1^n\rangle \in \detnp$. However, when Lemma~1 is later referenced in Czerwinski's Corollary 1, it can be interpreted as claiming that $\detnp$ is undecidable. In order to create as thorough a critique as possible, we will cover both of these interpretations of Czerwinski's Lemma~1. 

First, we will consider the claim that given $\langle M, 1^n\rangle\in\detnp$, there exists no computable function that outputs an $x \in L(M)$~\cite{cze:t:p-neq-np-p-complete}. Clearly if it is assumed that $\langle M, 1^n\rangle\in\detnp$, there must exist at least one $x\in\{0,1\}^n$ where $M$ on input $x$ halts and accepts. Therefore, we will show that a function for computing such an $x$ exists by defining an $f$ that iteratively simulates an increasing but finite number of steps of $M$ on each $w\in\{0,1\}^n$, and halts if any simulation accepts. It should be noted that this concept---known as ``dove-tailing''---can be applied to an infinite set of input strings, but since our input strings are bounded to have a length of $n$, we are restricted to only needing to test a finite number of strings from $\{0,1\}^n$. 

\begin{proposition}
\label{prop:dnp-computable-f}
    For every Turing machine $M$ and every string $1^n$ such that $\langle M, 1^n\rangle\in\detnp$, there exists a computable function $f$ that outputs a string of length $n$ in the language of $M$.
\end{proposition}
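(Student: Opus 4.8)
The plan is to exhibit $f$ explicitly as a dove-tailing search and then verify that it halts with a correct output on every $\detnp$-instance. First I would define $f$ on an input $\langle M, 1^n\rangle$ as follows: iterate over stages $t = 1, 2, 3, \ldots$, and at each stage $t$ simulate $M$ for exactly $t$ steps on each of the (finitely many) $2^n$ strings in $\{0,1\}^n$, processing them in lexicographic order within the stage. As soon as some simulation accepts, halt and output the first such $x$ encountered; this rule makes the output unambiguous. Since the work performed before halting is a finite collection of finite simulations, each stage is effectively computable, so $f$ is computable.

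Next I would establish termination and correctness on inputs in $\detnp$. Suppose $\langle M, 1^n\rangle \in \detnp$. By the definition of $\detnp$, there is some $x_0 \in \{0,1\}^n$ that $M$ accepts, and it does so within some finite number of steps $t_0$. Hence at stage $t = t_0$ (if not sooner) the simulation of $M$ on $x_0$ accepts, so $f$ halts. Moreover, by construction $f$ emits a string $x$ only when the corresponding simulation has accepted, so its output satisfies $x \in \{0,1\}^n$ and $x \in L(M)$, which is precisely what the proposition demands.

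The essential point—and the reason this construction does not clash with any genuine undecidability—is the \emph{finiteness} of $\{0,1\}^n$: for a fixed $n$ there are only $2^n$ candidate strings, so interleaving their simulations (rather than running any single one to completion) ensures we are never trapped in a non-halting computation of $M$ on one string while a shorter accepting computation on another exists. I expect the only subtlety worth spelling out to be that $f$ is genuinely an algorithm despite being partial: $f$ need not halt on inputs $\langle M, 1^n\rangle \notin \detnp$, and it cannot be made total without deciding $\detnp$, but the proposition constrains the behavior of $f$ only on $\detnp$-instances, so partial computability is all that is claimed and all that is required.
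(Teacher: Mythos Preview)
Your proposal is correct and takes essentially the same dovetailing approach as the paper: stage-by-stage bounded simulation over the finitely many length-$n$ strings until an acceptance is found. The only cosmetic differences are that the paper hardcodes $M$ and $n$ into the machine (so its $f$ ignores its input and is total), whereas you build a single uniform $f$ taking $\langle M,1^n\rangle$ as input and note its partiality outside $\detnp$; the paper also remarks in passing that a constant function already suffices given the quantifier order, a shortcut you omit but do not need.
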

\begin{proof}
    Suppose $\langle M, 1^n\rangle \in \detnp$. 
    The approach we take here could be simplified greatly (indeed, if all $f$ has to do is output a string in $\{0,1\}^n \cap L(M)$ and the latter is nonempty (by assumption), then one of the constant functions that always returns a string in that set satisfies the proposition's requirements), but we take this approach as it generalizes in an obvious way that helps us establish the proof of Proposition~\ref{prop:dnp-undecidable}.
    
    We will proceed by describing a total machine $M'$ that computes the function $f$ described in the proposition statement on input $x$. $M'$ erases $x$ from its input tape and defines a constant $c \in\naturalnumberpositive$ (the actual choice of $c$ does not matter as long as it is a positive natural number).
    Next, $M'$ enters a main loop in which it iteratively simulates $M$ on the $\min(c, 2^n)$ lexicographically smallest strings of $\{0,1\}^n$ for exactly $c$ steps. During that step, if $M$ accepts on some string $w$, then $M'$ outputs $w$ and halts. If $M'$ finishes this iteration of the main loop without halting on any string, $c$ is increased by $1$ and the main loop is repeated. The function $f$ is clearly computable as all the steps in $M'$ are computable. Additionally, since simulating $M$ on each string is time-bound by $c$ steps, $M'$ will never run forever when simulating $M$ on a single string during its main loop.
    
    Clearly, for any $w\in \{0,1\}^n$, if $M$ on $w$ accepts, $M'$ on $w$ will return a string that is in $L(M)$. By the definition of $\detnp$, if $\langle M,1^n\rangle \in \detnp$ there must exist at least one $w\in\{0,1\}^n$ where $M$ accepts $w$. Therefore, since it is given that $\langle M, 1^n\rangle$ is in $\detnp$ and that $M'$ can simulate all possible values from $\{0,1\}^n$, this implies that at least one simulation of $M$ on a $w \in \{0,1\}^n$ in the inner loop of $M'$ will eventually halt and accept. Furthermore, because we are iteratively increasing $c$ after simulating on all $\min(c,2^n)$ lexicographically smallest strings in $\{0,1\}^n$ in each iteration, $M'$ will eventually halt for some $c$ and never run forever.
\end{proof}

Furthermore, we will consider the interpretation of Czerwinski's Lemma~1 that $\detnp$ is undecidable. In the paper's proof, it is claimed that it for a fixed $x \in \{0,1\}^n$ it is undecidable whether $x\in L(M)$, however, no explanation or proof of this vague statement is ever provided \cite{cze:t:p-neq-np-p-complete}. Additionally, the authors attempt to apply Rice's Theorem to show that for a fixed $x\in\{0,1\}^n$ it is undecidable if $x$ is a member of $\detnp$ (even if $x$ is the only element in $\detnp$) because ``it'' is a nontrivial property of $\detnp$~\cite{cze:t:p-neq-np-p-complete}. While this may be true, no further proof beyond this unclear sentence is used to back up these claims. 
Because of this, we will provide our own proof that membership in $\detnp$ is undecidable by a many-one reduction from a known undecidable language. And we will in the next section show that even if this interpretation of Lemma~1 is correct, their Corollary~1, which their main theorem centrally relies on, is false.

The halting problem, $\hp$, is a known undecidable problem~\cite{hop-mot-ull:b:intro-automata-languages-computation}. In this paper, we will define the halting problem as~\cite{hop-mot-ull:b:intro-automata-languages-computation}
$$\hp = \{\langle M,w\rangle \mid M \text{ halts on input }w\}.$$

Since the halting problem is known to be undecidable, we will prove that $\hp \manyone \detnp$ to show that $\detnp$ is also undecidable. 

\begin{proposition}\label{prop:dnp-undecidable}
    The language $\detnp$ is undecidable.
\end{proposition}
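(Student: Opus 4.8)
The plan is to exhibit a many-one reduction $g$ from the halting problem $\hp$ to $\detnp$; since $\hp$ is undecidable and $\manyone$ preserves undecidability, this immediately yields that $\detnp$ is undecidable. The central idea is to convert the question ``does $M$ halt on $w$?'' into the question ``does some machine accept a string of a fixed length?'', which requires bridging the gap between \emph{halting} and \emph{accepting}.

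First I would describe, on input $\langle M, w\rangle$, how to effectively construct a new deterministic Turing machine $M'$ that erases its own input, simulates $M$ on $w$, and enters its accepting state precisely when (and if) that simulation halts---regardless of whether $M$ accepts or rejects $w$. Because $M'$ discards its input, its behavior is identical on every string: it accepts all strings if $M$ halts on $w$, and accepts no string otherwise. I would then set $g(\langle M, w\rangle) = \langle M', 1^1\rangle$, fixing the length parameter $n = 1$ (any fixed length works), and note that $g$ is computable since building the description of $M'$ from $M$ and $w$ is a routine effective transformation.

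Next I would verify the two directions of the equivalence. If $M$ halts on $w$, then $M'$ accepts every length-$1$ string, so the existential condition in the definition of $\detnp$ holds and $\langle M', 1^1\rangle \in \detnp$. Conversely, if $M$ does not halt on $w$, then $M'$ loops on every input, accepts nothing, and so $\langle M', 1^1\rangle \notin \detnp$. Together these give $\langle M, w\rangle \in \hp \iff g(\langle M, w\rangle) \in \detnp$, establishing $\hp \manyone \detnp$ and hence the undecidability of $\detnp$.

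The main obstacle---more a design point than a genuine difficulty---is correctly handling the halting-versus-accepting distinction: $\detnp$ is phrased in terms of acceptance, whereas $\hp$ concerns mere halting, so $M'$ must be engineered to accept exactly when the simulated computation halts. The only other point requiring care is that $M'$'s acceptance must be independent of its own input, so that quantifying over $\{0,1\}^1$ faithfully records whether $M$ halts on $w$; this is guaranteed by having $M'$ erase its input before beginning the simulation.
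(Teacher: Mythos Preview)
Your proposal is correct and follows essentially the same approach as the paper: both give a many-one reduction from $\hp$ to $\detnp$ by building, from $\langle M,w\rangle$, a machine that simulates $M$ on $w$ and pairing it with a fixed length parameter. The only cosmetic differences are that the paper uses $n=0$ (the string $\epsilon$) rather than your $n=1$, and has $M'$ reject inputs other than $\epsilon$ rather than erase its input; your treatment of the halting-versus-accepting distinction is in fact more explicit than the paper's.
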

\begin{proof}
    To show that $\hp$ many-one reduces to $\detnp$, we will give a computable function $f$ such that $(\forall x)[x \in \hp \iff f(x) \in \detnp]$. We will define $f$ as follows.
    
    First, $f$ will attempt to decode $x$ into $\langle M,w \rangle$, where $M$ is a TM and $w$ is a string. If $x$ cannot be decoded, $f$ will return a fixed string $z \not\in\detnp$. Otherwise, $f$ will construct a TM $M'$ that takes an input $y$, and checks if $y$ is equal to $\epsilon$. If $y$ is not equal to $\epsilon$, $M'$ will immediately reject. However, if $y=\epsilon$, $M'$ immersively simulates $M$ on $w$. If $f$ did not return $z$, $f$ will return the pair $\langle M',\epsilon\rangle$.

    If $x$ cannot be decoded into a TM $M$ and string $w$, $f(x)$ will output $z\notin \detnp$. Conversely, if $x$ is a valid encoding, $f$ will return a TM $M'$ and the string $\epsilon$. Therefore, since $M'$ can be constructed for all $x$, $f(x)$ is a computable function.
    
    For all syntactically valid encoding of $x$, simulating $f(x)$'s resulting TM $M'$ on its resulting string $\epsilon$ will always cause $M'$ to simulate $x$'s encoded $M$ on $w$. Therefore, if $x \in \hp$, $f(x)$ will create a TM $M'$ that halts on input $\epsilon$. Because of this, $\epsilon \in L(M')$, so $f(x) \in \detnp$ by $\detnp$'s definition. However, if $x \notin \hp$, $f(x)$ will create a TM $M'$ that will never halt on input $\epsilon$. Since this $M'$ will run forever on $\epsilon$, it is clear that $\epsilon \notin L(M')$, which means that $(\forall x)[x \notin \hp \implies f(x) \notin \detnp]$. Since we have proven that $f$ is a computable function and that $f$ satisfies the biconditional for all $x$, $\hp \manyone \detnp$. 
\end{proof}

\subsection{Analysis of Czerwinski's Corollary~1}
\label{subsec:analysis-cze-cor-1}

In Corollary~1, Czerwinski claims that for arbitrary TM $M$ and $n \in \naturalnumber$, testing whether $\pair{M, 1^n} \in \detnp$ requires $\Theta(2^n)$ queries to the oracle $\detp$ in the worst case~\cite{cze:t:p-neq-np-p-complete}, i.e.\ for any function $h$ such that $\detnp \boundedturing{h} \detp$, $h$ must be in $\Omega(2^n)$. Their purported proof of Corollary~1 lacks specifics and only concludes that the ``black-box complexity''~\cite{cze:t:p-neq-np-p-complete} (an unclear notion that they do not define in the paper; and we do not attempt to define it) is exponential: It is not apparent how Czerwinski arrived at the $\Omega(2^n)$ oracle queries claimed in their corollary. We give our understanding of their purported proof, but only to bring clarity to confusing aspects of their proof: We later show that computing membership in $\detnp$ can be done with a single query to the oracle $\detp$ and that Czerwinski's Corollary~1 is false.

Czerwinski's purported proof of Corollary~1 first claims that for an arbitrary TM $M$ and $n \in \naturalnumber$, testing whether an input $x$ of length $n \in \naturalnumber$ is in $L(M)$ is undecidable~\cite{cze:t:p-neq-np-p-complete}. It is ambiguous as to what set they are referring to: Either they are referring to $\detnp$ and simply restating that it is undecidable, or they are referring to a new set that, as described, is analogous to $\detp$ but with a bound on the length of $x$. Czerwinski then observes that the characteristic function of $L(M)$ is computable using $\detp$ as an oracle~\cite{cze:t:p-neq-np-p-complete}. They do not detail how to use the characteristic function to arrive at an exponential lower bound for the ``black-box complexity''~\cite{cze:t:p-neq-np-p-complete} of testing membership in $\detnp$ using $\detp$ as an oracle. We assume that Czerwinski is basing their purported proof on the idea that if computing the characteristic function of $\detnp$ requires exponential ``black-box complexity''~\cite{cze:t:p-neq-np-p-complete}, then computing membership in $\detnp$ requires exponential time. Towards that end, we further assume that Czerwinski is using an unstated proof by contradiction, where a faster-than-exponential, i.e. polynomial ``black-box complexity''~\cite{cze:t:p-neq-np-p-complete} for computing the characteristic function of $\detnp$ would make some known undecidable problem, such as the halting problem, decidable.

We now show that Czerwinski's Corollary~1 is false. Specifically, we use the notions of $\re$-completeness, many-one reductions, and Turing reductions to prove that membership in $\detnp$ can be computed with a single query to the oracle $\detp$. As a preliminary to our proof of Proposition~\ref{prop:dnp-btr-dp}, a set $A$ is $\re$-complete exactly if 1.~$A \in \re$, and 2.~$(\forall L \in \re)[L \manyone A]$.

\begin{proposition}\label{prop:dnp-btr-dp}
    $\detnp\boundedturing{1}\detp$.
\end{proposition}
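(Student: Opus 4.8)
The plan is to establish the stronger statement that $\detnp \manyone \detp$; by the observation recalled in Section~\ref{s:prelims} that $A \manyone B$ implies $A \boundedturing{1} B$, this immediately yields $\detnp \boundedturing{1} \detp$. So it suffices to exhibit a computable function $f$ with $(\forall z)[z \in \detnp \iff f(z) \in \detp]$.

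First I would specify $f$ on well-formed inputs. Given $z = \langle M, 1^n\rangle$, the function constructs (a description of) a Turing machine $M'$ that ignores its own input and runs the following dovetailing procedure, with $M$ and $n$ hard-coded: for $t = 1, 2, 3, \ldots$, simulate $M$ for $t$ steps on each of the $2^n$ strings in $\{0,1\}^n$; the moment some simulation accepts, $M'$ accepts, and otherwise $M'$ proceeds to stage $t+1$. The function then outputs $f(z) = \langle M', \epsilon\rangle$. On any input that does not decode to a pair $\langle M, 1^n\rangle$, I would have $f$ output a fixed string $z_0 \notin \detp$ (e.g., an encoding of a machine that immediately rejects paired with $\epsilon$). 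Since building the finite object $M'$ from $M$ and $n$ is a purely mechanical syntactic operation, $f$ is computable.

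Next I would verify the biconditional. The dovetailing is what makes $M'$ correct: because each simulation is run for only finitely many steps before control passes to the next string, $M'$ cannot get trapped looping on one string and will discover an accepting computation on \emph{any} string of length $n$ if one exists. Hence $M'$ accepts $\epsilon$ exactly when some $x \in \{0,1\}^n$ is accepted by $M$. Unwinding the definitions, $\langle M, 1^n\rangle \in \detnp$ iff $(\exists x \in \{0,1\}^n)[M$ accepts $x]$ iff $M'$ accepts $\epsilon$ iff $\langle M', \epsilon\rangle \in \detp$. The malformed-input case is immediate, as such $z$ lie outside $\detnp$ and are mapped to $z_0 \notin \detp$. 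This gives $\detnp \manyone \detp$ and therefore $\detnp \boundedturing{1} \detp$.

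I do not expect a genuine obstacle here; the one point requiring care is the correctness of $M'$. A naive reduction that simulates $M$ on the strings of $\{0,1\}^n$ sequentially would fail, since $M$ may run forever on an early string while accepting a later one, so the dovetailing (exactly the technique already used in the proof of Proposition~\ref{prop:dnp-computable-f}) is essential rather than cosmetic. It is worth noting that this single query suffices precisely because $\detp$ answers the ``does $M'$ accept'' question in one shot, collapsing the entire existential search over the $2^n$ strings into the \emph{construction} of $M'$ rather than into the queries themselves --- which is the sharp contrast with the $\Theta(2^n)$ lower bound asserted in Czerwinski's Corollary~1.
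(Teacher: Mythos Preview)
Your proof is correct. Both you and the paper establish the stronger fact $\detnp \manyone \detp$ and then invoke the implication $\manyone \Rightarrow \boundedturing{1}$, so at that level the strategies coincide. Where you differ is in how the many-one reduction is obtained: the paper argues abstractly by observing that $\detnp \in \re$ (via the same dovetailing idea you use) and that $\detp$ is $\re$-complete, so $\detnp \manyone \detp$ follows from the definition of completeness. You instead build the reduction explicitly, hard-coding $M$ and $n$ into a dovetailing machine $M'$ and mapping $\langle M,1^n\rangle$ to $\langle M',\epsilon\rangle$. Your route is more self-contained and makes the single query completely concrete, which fits nicely with the paper's goal of refuting the $\Theta(2^n)$ lower bound; the paper's route is shorter and situates the result within the standard $\re$-completeness framework. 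Substantively they are the same argument viewed at two levels of abstraction.
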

\begin{proof}
    It can be trivially shown via a technique similar to one in the proof of Proposition~\ref{prop:dnp-computable-f} that $\detnp \in \re$, i.e.\ that it is recursively enumerable. The oracle set $\detp$ is known to be $\re$-complete (see~\cite{hop-ull:b:automata}, which itself credits~\cite{tur:j:computable-numbers}). Therefore, it must be the case that $\detnp \manyone \detp$, which, by the properties of many-one reductions, implies that $\detnp \boundedturing{1} \detp$.\footnote{In fact, though we do not draw on it in this paper, our proofs of Propositions~\ref{prop:dnp-undecidable} and~\ref{prop:dnp-btr-dp} tacitly imply that $\detnp$ is $\re$-complete as the latter proof establishes that $\detnp\in\re$ and the former proof shows that $\hp\manyone\detnp$, and since recursive many-one reductions are transitive, it follows that for every $L\in\re$, $L \manyone \detnp$. In light of this completeness result, we could also use the well-known fact that $\re$-complete sets are recursive isomorphic to complete the proof of Proposition~\ref{prop:dnp-btr-dp}, but that is an overly-complicated approach.}
\end{proof}

From Proposition~\ref{prop:dnp-btr-dp}, it is obvious that $\detnp$ can be computed with a single query to the oracle $\detp$. Furthermore, this is optimal with respect to the number of queries within the Turing reduction framework, as $\detnp \boundedturing{0} \detp$ is equivalent to saying $\detnp$ is recursive, which is clearly false.

A more informal description of testing membership in $\detnp$ with a single oracle query is as follows: For any instance of the decision problem version of $\detnp$ (testing whether an instance is in $\detnp$), we can many-one reduce it to an instance of the decision problem version of $\detp$, the oracle set, as $\detnp \in \re$ and $\detp$ is $\re$-complete. Because the reduced instance is in $\detp$ if and only if the initial instance is in $\detnp$, we can test whether the initial instance is in $\detnp$ with a single query to $\detp$ on the reduced instance. Thus Czerwinski's claim in Corollary~1 that testing membership in $\detnp$ requires $\Theta(2^n)$ accesses to the oracle $\detp$~\cite{cze:t:p-neq-np-p-complete} is false.

\subsection{Analysis of Czerwinski's Theorem~1}

Czerwinski's Theorem~1 states that for any arbitrary TM $M$, we need $\Theta(2^n)$ accesses to the oracle $\up$ to decide whether $\langle M,1^n,1^t\rangle \in \unp$, where $n,t\in \naturalnumberpositive$~\cite{cze:t:p-neq-np-p-complete}.

Czerwinski claims that for any TM $M$ and natural number $n$, there is a time $t' \in \naturalnumber$ such that there exists a string $x$ of length $n$ that $M$ accepts if and only if there exists a string $x$ of length $n$ that $M$ accepts in at most $t'$ steps~\cite{cze:t:p-neq-np-p-complete}. Hence, there is a biconditional relationship between $\unp$ and $\detnp$, and also between $\up$ and $\detp$. Czerwinski then uses these relationships to infer that the complexity of accepting $\unp$ relative to oracle $\up$ is the same as the complexity of accepting $\detnp$ relative to oracle $\detp$~\cite{cze:t:p-neq-np-p-complete}. Czerwinski concludes the purported proof by attempting to show that these time complexities are optimal.

However, as mentioned in Section~\ref{subsec:analysis-cze-cor-1}, we only need one call to oracle $\detp$ so we might not need $\Theta(2^n)$ calls to the oracle $\up$ when accepting $\unp$. And so, one could possibly argue that the time complexity of accepting $\unp$ relative to oracle $\up$ is not the same as that of accepting $\detnp$ relative to oracle $\detp$.

In conclusion, as Czerwinski's Theorem~1 relies mostly on Corollary~1 and by Section~\ref{subsec:analysis-cze-cor-1}, this proof seems not to establish the result. Moreover, the proof of the similar time complexity between converting $\unp$ and $\detnp$, and between $\up$ and $\detp$ is not clear enough. This leads us to doubt the veracity of that statement.

\section{Conclusion}\label{s:conclusion}

Through this paper, we have established that the crux of Czerwinski's paper~\cite{cze:t:p-neq-np-p-complete}, i.e., its Corollary~1, is false by showing that $\detnp \boundedturing{1} \detp$. Furthermore, we find another important flaw in the purported proof of the paper's Theorem~1, which asserts that if $\detnp \not\in \p^{\detp}$, then $\unp \not\in \p^{\up}$. However, the paper's argument confuses its own notion of ``black box complexity'' with the notion of ``time complexity.'' Indeed, in our Proposition~\ref{prop:dnp-btr-dp}, we establish that one oracle query suffices and yet, this sheds no light on the number of steps taken by the ``base machine'' (i.e., the oracle machine). Furthermore, the paper~\cite{cze:t:p-neq-np-p-complete} claims to separate $\p^\p$ and $\np^\p$, which would be equivalent to $\p \neq \np$. Thus the main theorem of the paper is false and does not show that $\p \neq \np$.

\paragraph{Acknowledgements}

We would like to thank
Yumeng He,
Lane A. Hemaspaandra,
Matan Kotler-Berkowitz, and
Zeyu Nie
for their helpful comments on prior drafts.
The authors are responsible for any remaining errors.

\bibliographystyle{alpha}
\bibliography{gry-reu,local_refs}

\end{document}